\pdfoutput=1  
\documentclass[letterpaper, 10pt, conference]{ieeeconf}
\IEEEoverridecommandlockouts
\usepackage{amsmath,amssymb}

\usepackage{amsthm}
\usepackage{graphicx}
\usepackage{booktabs}
\usepackage{xcolor}
\usepackage{algorithm}
\usepackage{algorithmic}
\usepackage[hidelinks]{hyperref}

\newtheorem{proposition}{Proposition}
\newtheorem{definition}{Definition}
\newtheorem{corollary}{Corollary}

\newcommand{\C}{\mathbb{C}}
\newcommand{\R}{\mathbb{R}}
\newcommand{\argu}{\operatorname{arg}}
\newcommand{\KSW}{\mathrm{KSW}}
\newcommand{\SP}{\mathrm{SP}}
\newcommand{\Dsec}{\mathrm{Det}}

\title{\LARGE \bf
A Spectral--Phase Admissibility Certificate for Complex Linear Maps}

\author{Snigdha Chandan Khilar$^{1}$%
\thanks{$^{1}$Independent Researcher.
        {\tt\small snkhilar@gmail.com}}%
}

\begin{document}
\maketitle
\thispagestyle{empty}
\pagestyle{empty}

\begin{abstract}
The Kontsevich--Segal--Witten (KSW) criterion, $\sum_i |\argu \lambda_i| < \pi$,
characterizes which complex metrics yield convergent Gaussian path integrals in
quantum gravity. We import this criterion into machine learning as a constraint
on the spectrum of a complex linear map, and study what it does and does not
provide. We first show, and prove, that the KSW value is \emph{orthogonal} to
every magnitude-based constraint used for stability (spectral norm, orthogonal
and unitary parameterizations) and \emph{distinct} from positive-definiteness:
it constrains the collective \emph{phase} of the spectrum, an axis these methods
leave free. We then establish an exact, brute-force--verified hierarchy of three
differentiable, $O(d^3)$ certificates of increasing strength---a determinant
sector, a subset-product envelope $\SP$, and full KSW---and prove which one
certifies which family of derived quantities. The subset-product certificate is
shown to be the exact condition under which \emph{every} exterior-power
eigenvalue of the map stays off the negative real axis, i.e.\ under which an
exponentially large family of log-volume readouts remains well defined; it makes
the identical accept/reject decision as an exponential minor enumeration at up to
$10^{9}\times$ lower cost. We give a differentiable enforcement based on a Schur
parameterization and demonstrate, on an objective that genuinely depends on the
exterior-power tower, that the subset-product certificate is the correct choice:
a determinant-only guard is fooled, full KSW is over-conservative, and brute
force is redundant. Finally, we delimit scope with two
results: KSW is \emph{not} a stabilizer of deep linear propagation (tightening
the budget worsens eigenvector conditioning), and magnitude-based objectives
---normalizing-flow likelihoods and positive-definite log-determinants---are
provably phase-blind and hence unaffected by the certificate. Its applicability
is therefore confined to objectives that consume the argument of a spectral
product. All code is available at
\url{https://github.com/nssprogrammer/ksw}.
\end{abstract}

\section{INTRODUCTION}

Several successful neural architectures are, at bottom, imports of a physical
structure: diffusion models from nonequilibrium
thermodynamics~\cite{sohldickstein,ddpm}, Hopfield
networks and attention from associative-memory spin systems~\cite{hopfield,attention},
and neural ODEs from dynamical systems~\cite{neuralode}. Motivated by this pattern, we examine a criterion that
has, to our knowledge, no prior machine-learning use: the
Kontsevich--Segal--Witten (KSW) admissibility criterion for complex
metrics~\cite{ks,witten}, recently emphasized in the study of gravitational
wormholes and complex saddle points~\cite{mmm}.

For a complex metric with eigenvalues $\lambda_i$ (relative to a fixed positive
reference), KSW declares the metric \emph{admissible}---the Gaussian path
integral of every $p$-form fluctuation converges---if and only if
\begin{equation}
\sum_{i=1}^{d} \bigl|\argu \lambda_i\bigr| < \pi ,
\label{eq:ksw}
\end{equation}
where $\argu$ denotes the principal argument. The criterion constrains the
collective \emph{phase} of a spectrum. Intuitively, each eigenvalue is a complex
number with a size $|\lambda_i|$ and an angle $\argu\lambda_i$; \eqref{eq:ksw}
caps the \emph{total angle} the spectrum may occupy, which keeps every
eigenvalue---and every product of eigenvalues---off the negative real axis, so
that determinants and their logarithms remain single-valued. This is conspicuous
because essentially
every stability tool in machine learning constrains a spectrum's
\emph{magnitude}: spectral normalization bounds the largest singular value,
orthogonal and unitary parameterizations fix $|\lambda_i| = 1$, and Lipschitz
regularizers bound gain. The phase axis is left free. It is therefore natural to
ask whether \eqref{eq:ksw} is a useful, and genuinely new, constraint.

\paragraph*{Contributions and honest scope}
We answer this in three parts, and we are deliberate about the boundary between
what the criterion does and does not deliver.
\begin{enumerate}
\item \textbf{Distinctness (Sec.~\ref{sec:distinct}, \ref{sec:distinct-exp}).}
We prove and verify that the KSW value is orthogonal to every magnitude-based
constraint and distinct from positive-definiteness. It occupies unoccupied
ground: the collective phase of the spectrum.
\item \textbf{An exact certificate hierarchy (Sec.~\ref{sec:theory}).}
We show that \eqref{eq:ksw} sits atop a hierarchy of three differentiable,
$O(d^3)$ certificates---a determinant sector $\Dsec$, a subset-product envelope
$\SP$, and full $\KSW$, with $\Dsec \le \SP \le \KSW \le 2\SP$---and we prove
which family of derived quantities each certifies. We give elementary proofs and
confirm them against exponential brute force.
\item \textbf{Cost and use (Sec.~\ref{sec:algo}, \ref{sec:exp}).}
We give a differentiable enforcement, show the certificate reproduces an
exponential minor enumeration's exact decision at up to $10^9\times$ lower cost,
and demonstrate on a tower-dependent objective that the subset-product
certificate is the correct one.
\end{enumerate}
We also report a negative result that fixes the criterion's scope: KSW is
\emph{not} a stabilizer of deep linear propagation. Under iterated maps,
tightening the budget clusters eigenvalues and \emph{worsens} conditioning
(Sec.~\ref{sec:negative}). The criterion's value lies not in generic layers but
in objectives that touch the exterior-power tower---determinants, subdeterminant
families, and multi-scale log-volumes.

\section{BACKGROUND: THE KSW CRITERION}
\label{sec:background}

We recall the origin of \eqref{eq:ksw} because it dictates the correct
machine-learning translation; the physics is used only as motivation, and the
remainder of the paper (Sec.~\ref{sec:theory} onward) is self-contained and
requires none of it. Consider a complex metric diagonalized as
$g = \operatorname{diag}(\lambda_1,\dots,\lambda_d)$ and a $p$-form field with
kinetic action $\int \sqrt{g}\, g^{-1}\cdots g^{-1} F\wedge \star F$. A mode
whose indices split the directions into a subset $S$ (``lowered'') and its
complement (``raised'') carries a Gaussian coefficient
\begin{equation}
c_\varepsilon \;=\; \prod_{i} \lambda_i^{\varepsilon_i/2},
\qquad \varepsilon_i \in \{+1,-1\},
\label{eq:sector}
\end{equation}
with $\varepsilon_i = -1$ for $i \in S$. The integral over that mode converges
iff $\operatorname{Re}(c_\varepsilon) > 0$, i.e.\ iff
$\bigl|\tfrac12 \sum_i \varepsilon_i \argu\lambda_i\bigr| < \tfrac{\pi}{2}$,
equivalently
\begin{equation}
\Bigl| \sum_i \varepsilon_i \argu \lambda_i \Bigr| < \pi .
\label{eq:persector}
\end{equation}
Requiring \eqref{eq:persector} for \emph{all} $2^d$ sign patterns
$\varepsilon$ is exactly \eqref{eq:ksw}, as we prove below. The criterion is thus
an envelope: a single scalar certifying an exponential family of convergences.
This is the property we will exploit, and the reason a faithful port must target
objects built from products over subsets of the spectrum.

\section{THEORY: A CERTIFICATE HIERARCHY}
\label{sec:theory}

\begin{definition}[Spectral phase functionals]
Let $M \in \C^{d\times d}$ have eigenvalues $\lambda_1,\dots,\lambda_d$ and set
$a_i = \argu\lambda_i \in (-\pi,\pi]$. With
$\Sigma_+ = \sum_{a_i>0} a_i$ and $\Sigma_- = \sum_{a_i<0} |a_i|$, define
\begin{align*}
\Dsec(M) &= \Bigl|\textstyle\sum_i a_i\Bigr|, \\
\SP(M)   &= \max(\Sigma_+,\Sigma_-), \\
\KSW(M)  &= \textstyle\sum_i |a_i|.
\end{align*}
\end{definition}

In words, $\Dsec$ is the phase of the determinant, $\SP$ is the largest phase any
product over a subset of eigenvalues can reach, and $\KSW$ is the total phase
budget of the spectrum; all three ignore eigenvalue magnitudes entirely.

\begin{proposition}[Sign-pattern envelope]
\label{prop:ksw}
$\displaystyle \KSW(M) = \max_{\varepsilon \in \{\pm1\}^d} \Bigl|\sum_i \varepsilon_i a_i\Bigr|.$
\end{proposition}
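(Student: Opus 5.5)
The plan is a two-sided bound, both halves elementary. For the upper bound, fix an arbitrary sign pattern $\varepsilon \in \{\pm1\}^d$ and apply the triangle inequality:
\[
\Bigl|\sum_i \varepsilon_i a_i\Bigr| \le \sum_i |\varepsilon_i|\,|a_i| = \sum_i |a_i| = \KSW(M).
\]
Taking the maximum over all $2^d$ patterns gives $\max_\varepsilon |\sum_i \varepsilon_i a_i| \le \KSW(M)$.

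For the lower bound, I will exhibit a pattern that attains $\KSW(M)$. I choose $\varepsilon_i = \operatorname{sgn}(a_i)$, and set $\varepsilon_i = +1$ when $a_i = 0$. Then $\varepsilon_i a_i = |a_i|$ for every $i$, so every term in the sum is nonnegative, and
\[
\Bigl|\sum_i \varepsilon_i a_i\Bigr| = \sum_i |a_i| = \KSW(M).
\]
The maximum is therefore at least $\KSW(M)$, and together with the upper bound this gives equality.

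The only step that needs care is the assignment of signs in degenerate cases. Zero arguments contribute nothing, so either sign works. The value $a_i = \pi$ (a negative real eigenvalue) is unproblematic because $|a_i| = \pi$ regardless of convention. It is also worth noting that the flipped pattern $-\varepsilon$ attains the same value, so the maximum is achieved by (at least) a pair $\pm\varepsilon$.

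I will close with a remark linking the result to Sec.~\ref{sec:background}. Because the maximum is attained, requiring $|\sum_i \varepsilon_i a_i| < \pi$ for all $\varepsilon$ is equivalent to $\KSW(M) < \pi$, which is exactly the claimed equivalence between \eqref{eq:persector} for all sectors and \eqref{eq:ksw}. In the notation of the definition, the attaining pattern also gives $\KSW = \Sigma_+ + \Sigma_-$, which foreshadows the bound $\KSW \le 2\SP$.
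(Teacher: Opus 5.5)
Your proof is correct and matches the paper's. In both, the triangle inequality gives the upper bound and the pattern $\varepsilon_i=\operatorname{sign}(a_i)$ attains $\KSW(M)$. Your convention $\varepsilon_i=+1$ when $a_i=0$ is a small tidying of the paper's argument, which uses $\operatorname{sign}(a_i)$ without noting that $\operatorname{sign}(0)=0$ is not an allowed sign.
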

\begin{proof}
Taking $\varepsilon_i=\operatorname{sign}(a_i)$ gives $\sum_i \varepsilon_i a_i
= \sum_i |a_i|$, so the maximum is $\ge \KSW(M)$. Conversely
$|\sum_i \varepsilon_i a_i| \le \sum_i |\varepsilon_i a_i| = \sum_i|a_i|$ for
every $\varepsilon$. \end{proof}

\begin{corollary}[KSW $=$ all-sector convergence]
\label{cor:sectors}
$\KSW(M)<\pi$ iff every $p$-form sector \eqref{eq:persector} converges.
\end{corollary}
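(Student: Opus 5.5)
The corollary follows directly from Proposition~\ref{prop:ksw}. I would argue both directions through the identity $\KSW(M)=\max_{\varepsilon}\bigl|\sum_i\varepsilon_i a_i\bigr|$, where $a_i=\argu\lambda_i$ are the same principal arguments that appear in \eqref{eq:persector}. The condition ``every $p$-form sector converges'' is, by the discussion in Sec.~\ref{sec:background}, exactly the statement that \eqref{eq:persector} holds for all $2^d$ sign patterns $\varepsilon\in\{\pm1\}^d$. So the task is to show that $\max_\varepsilon|\sum_i\varepsilon_i a_i|<\pi$ iff $|\sum_i\varepsilon_i a_i|<\pi$ for every $\varepsilon$.

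That equivalence holds because the sign-pattern set is finite. If the maximum is below $\pi$, every term is at most the maximum and hence below $\pi$. Conversely, if every one of the finitely many values is below $\pi$, the maximum is attained by some $\varepsilon^\ast$, so it is also strictly below $\pi$. Finiteness is what keeps the strict inequality from degenerating to $\le\pi$. Substituting Proposition~\ref{prop:ksw} then gives $\KSW(M)<\pi$ iff all sectors converge.

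The only step needing care is the translation between ``sector converges'' and \eqref{eq:persector}. Convergence of the Gaussian mode means $\operatorname{Re}(c_\varepsilon)>0$ with $c_\varepsilon=\prod_i\lambda_i^{\varepsilon_i/2}$. Using principal branches, $\argu c_\varepsilon$ is taken to be $\tfrac12\sum_i\varepsilon_i a_i$; this sum lies in $(-d\pi/2,\,d\pi/2]$ and need not be reduced modulo $2\pi$. I would adopt the paper's convention from Sec.~\ref{sec:background} that the sector phase is this unreduced sum, as the physics derivation prescribes. Under that convention, $\operatorname{Re}(c_\varepsilon)>0$ is equivalent to $|\tfrac12\sum_i\varepsilon_i a_i|<\pi/2$, which is \eqref{eq:persector}.

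I expect this branch bookkeeping to be the only real obstacle. Reducing modulo $2\pi$ would let large phase sums wrap around, and positivity of the real part would then not match \eqref{eq:persector}. Since the corollary is stated with ``every $p$-form sector \eqref{eq:persector} converges'' taken as the definition, I would state the convention explicitly, note that it is the one fixed in Sec.~\ref{sec:background}, and keep the proof to the finite-maximum argument above.
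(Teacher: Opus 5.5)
Your proof is correct and follows the paper's route: the corollary is read off from Proposition~\ref{prop:ksw}, with sector convergence meaning \eqref{eq:persector}, and your finite-maximum step is exactly what carries the strict inequality through the max. One small correction to your side remark: no branch convention can change $\operatorname{Re}(c_\varepsilon)=|c_\varepsilon|\cos\bigl(\tfrac12\sum_i\varepsilon_i a_i\bigr)$, so for a single $\varepsilon$ the condition $\operatorname{Re}(c_\varepsilon)>0$ is weaker than \eqref{eq:persector} (the two differ once the half-sum exceeds $3\pi/2$ in absolute value), but the all-$\varepsilon$ versions still coincide, because flipping one sign moves the half-sum by $|a_i|\le\pi$ and so cannot skip over a closed interval of length $\pi$ on which the cosine is nonpositive.
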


\begin{proposition}[Subset envelope]
\label{prop:sp}
$\displaystyle \SP(M) = \max_{S \subseteq \{1,\dots,d\}} \Bigl|\sum_{i\in S} a_i\Bigr|.$
\end{proposition}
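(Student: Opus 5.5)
The plan is to show two inequalities, as in the proof of Proposition~\ref{prop:ksw}: the maximum is at least $\SP(M)$ by exhibiting witnesses, and at most $\SP(M)$ by a bound valid for every subset.

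\emph{Lower bound.} Let $P=\{i: a_i>0\}$ and $N=\{i: a_i<0\}$. Choosing $S=P$ gives $\sum_{i\in S}a_i=\Sigma_+$. Choosing $S=N$ gives $\bigl|\sum_{i\in S}a_i\bigr|=\Sigma_-$. Hence the maximum over subsets is at least $\max(\Sigma_+,\Sigma_-)=\SP(M)$. The empty subset and indices with $a_i=0$ cause no trouble, since they contribute $0$.

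\emph{Upper bound.} Fix any $S$ and split its sum into its positive and negative parts:
\[
\sum_{i\in S}a_i=A-B,\qquad A=\sum_{i\in S\cap P}a_i\in[0,\Sigma_+],\quad B=\sum_{i\in S\cap N}|a_i|\in[0,\Sigma_-].
\]
For nonnegative reals, $|A-B|\le\max(A,B)$. Therefore
\[
\Bigl|\sum_{i\in S}a_i\Bigr|\le\max(A,B)\le\max(\Sigma_+,\Sigma_-)=\SP(M).
\]
Combining the two bounds gives equality.

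No step is a real obstacle. The only point needing care is the inequality $|A-B|\le\max(A,B)$, which is exactly where the subset envelope differs from the sign-pattern envelope of Proposition~\ref{prop:ksw}. A subset can only include or omit each term, so positive and negative phases partially cancel rather than add, whereas a sign flip would turn them into $A+B$. It is also worth checking the convention $a_i=\pi$ for negative real eigenvalues. This is harmless: such indices belong to $P$, and the argument only uses the real numbers $a_i$.
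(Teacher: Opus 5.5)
Your proof is correct and matches the paper's argument. The paper uses the same witnesses $S=\{i:a_i>0\}$ and $S=\{i:a_i<0\}$ for the lower bound, and your inequality $|A-B|\le\max(A,B)$ is a repackaging of its two-sided bound $-\Sigma_-\le\sum_{i\in S}a_i\le\Sigma_+$.
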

\begin{proof}
For any $S$, $\sum_{i\in S} a_i \le \sum_{i\in S,\,a_i>0} a_i \le \Sigma_+$ and
$\sum_{i\in S} a_i \ge -\Sigma_-$, so $|\sum_{i\in S}a_i|\le \max(\Sigma_+,\Sigma_-)$.
Equality holds at $S=\{i:a_i>0\}$ (value $\Sigma_+$) or $S=\{i:a_i<0\}$
(value $\Sigma_-$). \end{proof}

\begin{proposition}[Exact subset-product admissibility]
\label{prop:exact}
Every subset product $\prod_{i\in S}\lambda_i$, carried with its additive phase
$\sum_{i\in S} a_i$, stays off the negative real axis
$(|\sum_{i\in S}a_i|<\pi)$ for all $S$ if and only if $\SP(M)<\pi$.
\end{proposition}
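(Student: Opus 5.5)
The plan is to reduce the statement directly to Proposition~\ref{prop:sp}. By construction, the additive phase carried by the subset product $\prod_{i\in S}\lambda_i$ is $\phi_S := \sum_{i\in S} a_i$. With this phase convention, the product avoids the negative real axis exactly when $|\phi_S|<\pi$; that is simply how the statement of Proposition~\ref{prop:exact} is phrased. So the claim to prove is
\[
\bigl(\forall S\subseteq\{1,\dots,d\}:\ |\phi_S|<\pi\bigr)\iff \SP(M)<\pi .
\]

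For the direction ``$\Leftarrow$'', I would invoke Proposition~\ref{prop:sp}, which gives $|\phi_S|\le \SP(M)$ for every $S$. If $\SP(M)<\pi$, then every $|\phi_S|<\pi$.

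For the direction ``$\Rightarrow$'', the key point is that the maximum in Proposition~\ref{prop:sp} is \emph{attained}. The family of subsets is finite, and the proof of that proposition exhibits explicit maximizers: $S_+=\{i:a_i>0\}$ with $\phi_{S_+}=\Sigma_+$, and $S_-=\{i:a_i<0\}$ with $|\phi_{S_-}|=\Sigma_-$. If every $|\phi_S|<\pi$, then in particular $\Sigma_+<\pi$ and $\Sigma_-<\pi$, so $\SP(M)=\max(\Sigma_+,\Sigma_-)<\pi$. The empty set contributes $\phi_\emptyset=0$, so the edge case where all $a_i=0$ causes no trouble.

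I expect the only delicate point to be interpretive rather than computational. The statement concerns the \emph{additive} phase, not the principal argument of the product. The principal argument of $\prod_{i\in S}\lambda_i$ always lies in $(-\pi,\pi]$ and can wrap around, so the equivalence would fail for it. I would therefore say explicitly that ``off the negative real axis'' means $|\phi_S|<\pi$, so that $\phi_S$ is itself the principal argument and $\log\prod_{i\in S}\lambda_i=\sum_{i\in S}\log\lambda_i$ holds with principal branches. This is the single-valuedness property the paper is after. Once that convention is fixed, the proof is the two-line argument above.
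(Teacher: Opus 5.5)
Your proof is correct and matches the paper's, which also derives the claim directly from Proposition~\ref{prop:sp}: your ``$\Leftarrow$'' direction is the bound $|\sum_{i\in S}a_i|\le\SP(M)$, and your ``$\Rightarrow$'' direction is the paper's observation that the worst subset ($\{i:a_i>0\}$ or $\{i:a_i<0\}$) attains $\SP(M)$. Your remark separating the additive phase from the principal argument of the product is a sound clarification of the convention the statement already fixes in its parenthetical.
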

\begin{proof}
Immediate from Proposition~\ref{prop:sp}: the worst subset attains $\SP(M)$.
\end{proof}

The eigenvalues of the $k$-th exterior power $\Lambda^k(M)$ are exactly the
subset products $\prod_{i\in S}\lambda_i$ with $|S|=k$. Thus
Proposition~\ref{prop:exact} states that $\SP(M)<\pi$ is the exact condition for
every exterior-power eigenvalue---hence every $\log\det \Lambda^k$ and every
multi-scale log-volume---to remain well defined (no branch-cut crossing).

\begin{proposition}[Hierarchy]
\label{prop:hier}
$\Dsec(M) \le \SP(M) \le \KSW(M) \le 2\,\SP(M).$
\end{proposition}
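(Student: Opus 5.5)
The plan is to reduce all three inequalities to elementary facts about the two nonnegative numbers $\Sigma_+$ and $\Sigma_-$. Since $a_i = 0$ contributes to neither sum, we have the identities
\[
\sum_i a_i = \Sigma_+ - \Sigma_-, \qquad \sum_i |a_i| = \Sigma_+ + \Sigma_- .
\]
Hence $\Dsec(M) = |\Sigma_+ - \Sigma_-|$, $\SP(M) = \max(\Sigma_+,\Sigma_-)$ and $\KSW(M) = \Sigma_+ + \Sigma_-$. The hierarchy then follows from three scalar inequalities for $x,y \ge 0$, namely $|x-y| \le \max(x,y) \le x+y \le 2\max(x,y)$.

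I will carry out the steps in order. First, for $\Dsec \le \SP$, I could argue directly that $|x-y| \le \max(x,y)$ because both $x-y \le x$ and $y-x \le y$ hold when $x,y \ge 0$. Alternatively, and more in the spirit of the paper, I would invoke Proposition~\ref{prop:sp} with $S = \{1,\dots,d\}$: the full set is one admissible subset, so its value $\Dsec(M)$ is bounded by the maximum $\SP(M)$. Second, $\SP \le \KSW$ follows because $\max(x,y) \le x+y$ for nonnegative $x,y$. Equivalently, every subset sum satisfies $|\sum_{i\in S} a_i| \le \sum_{i\in S}|a_i| \le \sum_i |a_i|$, which matches the structure of the proof of Proposition~\ref{prop:ksw}. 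Third, $\KSW \le 2\,\SP$ follows because $x+y \le 2\max(x,y)$.

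I do not expect a genuine obstacle. The only point needing care is the bookkeeping at $a_i = 0$ and at $a_i = \pi$. Under the principal-argument convention $a_i \in (-\pi,\pi]$, an argument of $\pi$ is counted in $\Sigma_+$, and a zero argument is counted in neither sum; both are consistent with the identities above. I would state this explicitly so that the decomposition of $\sum_i a_i$ and $\sum_i |a_i|$ is airtight. If desired, I would also note tightness. Equality $\KSW = 2\,\SP$ holds when $\Sigma_+ = \Sigma_-$, and equality $\Dsec = \SP = \KSW$ holds when all arguments share a sign. This shows that the constants in the hierarchy cannot be improved.
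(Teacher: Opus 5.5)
Your proof is correct and is essentially the paper's own argument. The paper also rewrites $\Dsec=|\Sigma_+-\Sigma_-|$, $\SP=\max(\Sigma_+,\Sigma_-)$, $\KSW=\Sigma_++\Sigma_-$ and chains the scalar inequalities $|x-y|\le\max(x,y)\le x+y\le 2\max(x,y)$; your remarks on the $a_i\in\{0,\pi\}$ bookkeeping and on tightness are correct additions but not needed.
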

\begin{proof}
$\Dsec=|\Sigma_+-\Sigma_-|\le \max(\Sigma_+,\Sigma_-)=\SP$;
$\SP=\max(\Sigma_+,\Sigma_-)\le \Sigma_++\Sigma_-=\KSW$;
$\KSW=\Sigma_++\Sigma_-\le 2\max(\Sigma_+,\Sigma_-)=2\SP$. \end{proof}

Propositions~\ref{prop:ksw}--\ref{prop:hier} give three certificates of strictly
increasing strength, each computable from one eigendecomposition:
$\Dsec<\pi$ controls $\det M$ alone; $\SP<\pi$ controls the entire exterior-power
tower and is \emph{exact} for it; $\KSW<\pi$ controls the $p$-form ratio sectors
and is the strongest. The gap $\SP<\pi\le\KSW$ is not vacuous: a spectrum can
have every subset product admissible while a ratio sector diverges
(Sec.~\ref{sec:exp}, Fig.~\ref{fig:hier}).

\subsection{Distinctness from magnitude and positivity}
\label{sec:distinct}

\begin{proposition}[Phase, not magnitude]
\label{prop:orth}
$\KSW$, $\SP$, and $\Dsec$ are invariant under $\lambda_i \mapsto r_i \lambda_i$
for any $r_i>0$. Consequently they are independent of all singular values and of
the spectral radius; every magnitude-based constraint leaves them unconstrained,
and conversely.
\end{proposition}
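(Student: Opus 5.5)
The plan is to work directly from the definitions: all three functionals depend on $M$ only through the vector of principal arguments $a=(a_1,\dots,a_d)$. So the first step is to show that $a$ itself is unchanged under positive rescaling of each eigenvalue. For $r_i>0$, write $\lambda_i=|\lambda_i|e^{\mathrm{i}a_i}$ with $a_i\in(-\pi,\pi]$. Then $r_i\lambda_i=(r_i|\lambda_i|)e^{\mathrm{i}a_i}$. Since $r_i|\lambda_i|>0$ and $a_i$ already lies in the principal range, uniqueness of the polar form with principal argument gives $\argu(r_i\lambda_i)=a_i$.

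Zero eigenvalues have no argument; there the convention is fixed by the paper's implementation, and it is trivially preserved because $r_i\cdot 0=0$. Hence $a$, and with it $\Sigma_+$, $\Sigma_-$, $\Dsec=|\Sigma_+-\Sigma_-|$, $\SP=\max(\Sigma_+,\Sigma_-)$ and $\KSW=\Sigma_++\Sigma_-$, are all invariant. This gives the first sentence of the statement.

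For the ``consequently'' part, I would make independence precise through two explicit constructions realizing arbitrary combinations, taking diagonal (hence normal) matrices.

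\emph{Phases fixed, magnitudes free.} Given any target phases $a_i$ and any target positive singular values $\sigma_i$, the matrix $M=\operatorname{diag}(\sigma_ie^{\mathrm{i}a_i})$ has singular values $\sigma_i$, spectral radius $\max\sigma_i$, and phase vector $a$. So the phase functionals place no constraint on magnitudes.

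\emph{Magnitudes fixed, phases free.} Fixing the magnitudes, e.g.\ $\sigma_i=1$ (unitary/orthogonal-type) or any spectral-norm bound, still allows every phase vector in $(-\pi,\pi]^d$. In particular, $\KSW$, $\SP$ and $\Dsec$ each sweep their full ranges, from $0$ up to $d\pi$, $d\pi$ and $\pi$ respectively; $\Dsec$ is capped at $\pi$ because it is measured modulo the principal branch.

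Together these say that the map $M\mapsto(\text{magnitude data},a)$ is surjective onto the product of the two ranges, which is the intended meaning of ``leaves them unconstrained, and conversely.''

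The main obstacle is interpretive rather than technical. For non-normal $M$, the singular values are not the moduli $|\lambda_i|$, so the rescaling $\lambda_i\mapsto r_i\lambda_i$ does not literally act on singular values. I would handle this by noting that the functionals depend only on eigenvalue arguments, which are preserved under $M\mapsto PDP^{-1}\operatorname{diag}(r)\cdots$, i.e.\ under rescaling the eigenvalues in an eigenbasis. The independence claim then needs only the existence of realizations, which the diagonal constructions supply. No exhaustive characterization of non-normal matrices is needed.
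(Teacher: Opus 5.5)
Your argument is the paper's: the principal argument is unchanged by positive rescaling, so the phase vector $a$, and with it $\Dsec$, $\SP$ and $\KSW$, is invariant. The independence claim then rests on magnitudes and phases being settable separately, which your diagonal matrices $\operatorname{diag}(\sigma_i e^{\mathrm{i}a_i})$ make explicit where the paper only asserts it.

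One side remark is wrong. $\Dsec(M)=|\sum_i a_i|$ is the additive (unwrapped) phase and is not reduced modulo $2\pi$, so it is not capped at $\pi$. Taking all $a_i=\pi$ gives $\Dsec=d\pi$, so its range is $[0,d\pi]$, like the other two. If it were wrapped, the certificate $\Dsec<\pi$ would be almost vacuous. This does not damage the proof, since independence only uses that every $a\in(-\pi,\pi]^d$ is realizable at every choice of magnitudes, but you should drop the range claim.

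Two smaller points:
\begin{enumerate}
\item ``Every phase vector'' at unit modulus holds for complex unitaries, not for real orthogonal matrices. Their non-real eigenvalues come in conjugate pairs, but that is a realness restriction, not a magnitude one.
\item Your non-normal paragraph is unnecessary, because the functionals depend on the spectrum alone. As written it also covers only diagonalizable $M$. If you want a matrix-level realization of $\lambda_i\mapsto r_i\lambda_i$, rescale the diagonal of the Schur factor $T$ in $W=QTQ^{\mathsf H}$, which handles defective matrices too.
\end{enumerate}
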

\begin{proof}
$\argu(r_i\lambda_i)=\argu\lambda_i$ for $r_i>0$, so all three functionals are
unchanged, while singular values and $|\lambda_i|$ can be set arbitrarily. \end{proof}

Positive-definiteness is also distinct: requiring $\operatorname{Re}\lambda_i>0$
for all $i$ (each $|a_i|<\pi/2$) neither implies nor is implied by $\KSW<\pi$.
Three eigenvalues at $a_i=0.4\pi$ satisfy $\operatorname{Re}\lambda_i>0$ yet give
$\KSW=1.2\pi>\pi$; conversely an eigenvalue at $a=0.9\pi$
($\operatorname{Re}\lambda<0$) paired with one at $-0.05\pi$ gives
$\KSW=0.95\pi<\pi$. We confirm both directions empirically in
Sec.~\ref{sec:distinct-exp}.

\section{ALGORITHM}
\label{sec:algo}

\subsection{A differentiable admissible layer}
To train under any of the three certificates without differentiating through a
general (ill-conditioned) eigendecomposition, we parameterize the map in Schur
form, exposing the eigenphases directly. Let $A\in\C^{d\times d}$ be free and set
$Q=\exp(A-A^{\mathsf H})$ (unitary), the exponential-map parameterization used
for orthogonal and unitary constraints~\cite{expmparam}. Let
$T$ be upper triangular with diagonal $r_i e^{\mathrm i\theta_i}$
($r_i=e^{\rho_i}>0$) and free strictly-upper entries (non-normality). Then
\begin{equation}
W = Q\,T\,Q^{\mathsf H}, \qquad \argu\lambda_i(W)=\theta_i,
\label{eq:schur}
\end{equation}
since $W$ and $T$ are unitarily similar and $T$ is triangular. The certificates
become closed-form functions of $\theta$: $\Dsec=|\sum_i\theta_i|$,
$\SP=\max(\Sigma_+,\Sigma_-)$, $\KSW=\sum_i|\theta_i|$---no eigensolver in the
forward pass.

\subsection{Enforcement}
We enforce a budget $b<\pi$ by radial projection of the phase vector, or by a
one-sided penalty $\max(0,\,\mathcal C(\theta)-b)$ added to the loss, where
$\mathcal C\in\{\Dsec,\SP,\KSW\}$. Projection is exact and parameter-free
(Algorithm~\ref{alg:proj}); the penalty is preferred when the constraint should
be soft. Because all three functionals are $1$-homogeneous in $\theta$, a single
scaling restores feasibility.

\begin{algorithm}[t]
\caption{Phase projection onto an admissibility ball}
\label{alg:proj}
\begin{algorithmic}[1]
\REQUIRE eigenphases $\theta\in\R^d$; certificate $\mathcal C\in\{\Dsec,\SP,\KSW\}$; budget $b<\pi$
\STATE $\Sigma_+ \leftarrow \sum_{\theta_i>0}\theta_i$;\quad $\Sigma_- \leftarrow \sum_{\theta_i<0}|\theta_i|$
\STATE $v \leftarrow |\Sigma_+-\Sigma_-|,\ \max(\Sigma_+,\Sigma_-),\ \Sigma_++\Sigma_-$ for $\mathcal C=\Dsec,\SP,\KSW$
\IF{$v>b$}
  \STATE $\theta \leftarrow \theta \cdot b/v$ \COMMENT{$1$-homogeneous rescale}
\ENDIF
\RETURN $\theta$
\end{algorithmic}
\end{algorithm}

\section{EXPERIMENTS}
\label{sec:exp}

All quantities are computed in double precision; code accompanies the paper.
We verify each theoretical claim against exponential brute force before using it.

\subsection{Distinctness from existing constraints}
\label{sec:distinct-exp}

To isolate phase from magnitude we sample normal matrices with all
$|\lambda_i|=1$ (spectral radius exactly one, so every magnitude-based criterion
is blind) and random phases. Over $2{\times}10^4$ trials at $d=4$, ``spectral
radius $<1$'' accepts $0\%$ while $\KSW<\pi$ still accepts $4.4\%$, discriminating
purely on phase---an empirical counterpart to Proposition~\ref{prop:orth}. The
two positivity witnesses of Sec.~\ref{sec:distinct} are reproduced exactly:
the state with $\operatorname{Re}\lambda_i>0$ everywhere yet $\KSW=1.2\pi$ is
rejected, and the state with a negative-real-part eigenvalue yet $\KSW=0.95\pi$
is accepted. KSW is therefore neither a magnitude nor a positive-definiteness
constraint.

\subsection{The hierarchy, verified against brute force}
Over $1500$ random spectra at $d=8$ we confirm
Propositions~\ref{prop:ksw}--\ref{prop:hier} to machine precision: the closed
forms for $\SP$ and $\KSW$ match the maxima over all $2^d$ subsets and sign
patterns, the ordering $\Dsec\le\SP\le\KSW\le2\SP$ always holds, and
$\SP<\pi$ coincides exactly with ``all subset products off the negative real
axis'' (Proposition~\ref{prop:exact}). The gap is common: $287/1500\approx19\%$
of spectra satisfy $\SP<\pi\le\KSW$---the whole exterior-power tower is
admissible while a $p$-form ratio sector diverges. The nesting is drawn in
Fig.~\ref{fig:hier}.

\begin{figure}[t]
\centering
\includegraphics[width=0.86\columnwidth]{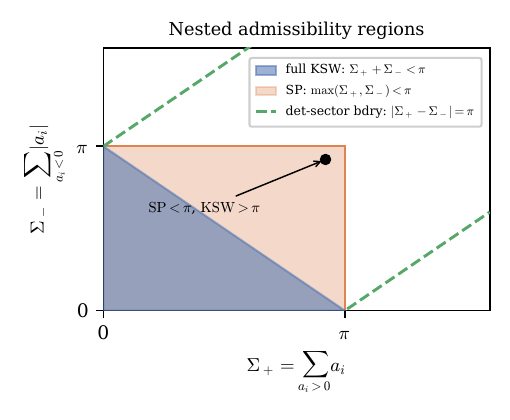}
\caption{The three certificates as regions in the $(\Sigma_+,\Sigma_-)$ plane,
where $\Sigma_+$ ($\Sigma_-$) is the total positive (negative) phase of the
spectrum. A matrix maps to one point here and is admissible under a certificate
iff its point lies in that region. Full KSW (triangle) $\subseteq$ subset-product
$\SP$ (square) $\subseteq$ the determinant-sector band. The marked point has the
whole exterior-power tower admissible ($\SP<\pi$) yet violates full KSW.}
\label{fig:hier}
\end{figure}

\subsection{Computational cost}
The certificate is an eigendecomposition plus an $O(d)$ reduction; the honest
alternative---checking that every subset product is admissible---enumerates
$2^d$ subsets. Fig.~\ref{fig:cost} reports both, together with a correctness
guard: across all sizes the $O(d^3)$ certificate returns the \emph{identical}
accept/reject decision as the exponential enumeration (Proposition~\ref{prop:exact}).
At $d=18$ brute force already costs $1.2\,$s per decision versus $0.27\,$ms for
the certificate ($\sim\!4500\times$); the projected gap at $d=40$ exceeds
$10^{9}\times$. Under the Schur parameterization the eigendecomposition is
avoided entirely and the certificate is $O(d)$.

\begin{figure}[t]
\centering
\includegraphics[width=0.86\columnwidth]{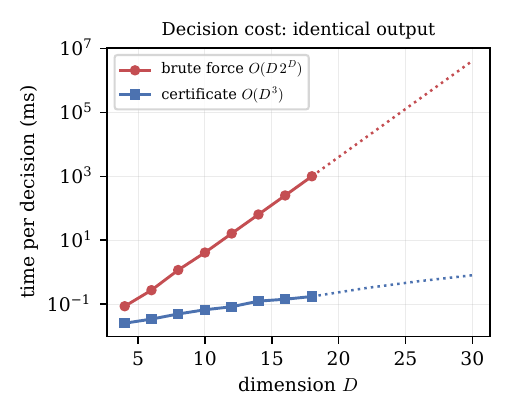}
\caption{Per-decision cost. The $O(d^3)$ certificate and the $O(d\,2^d)$
enumeration return identical decisions at all $d$; dotted lines extrapolate.}
\label{fig:cost}
\end{figure}

\subsection{Application: a tower-dependent objective}
\label{sec:app}
We construct the setting in which the criterion is meant to help: an objective
that reads out multi-scale log-volumes, i.e.\ the principal complex logarithm of
subset products $\prod_{i\in S}\lambda_i$. When a subset phase
$\sum_{i\in S}\theta_i$ crosses $\pm\pi$, that readout jumps by $2\pi$,
introducing a discontinuity. We drive a learnable spectrum toward a
\emph{mixed} target ($d=8$: four phases at $+0.55\pi$, four at $-0.55\pi$), so
that $\sum_i\theta_i\approx0$---the determinant is fooled---while the four
positive phases sum to $2.2\pi$, forcing a subset crossing. Table~\ref{tab:app}
compares five guards under identical training.

\begin{table}[t]
\caption{Constraining a tower-dependent objective ($d{=}8$, budget $0.9\pi$).
``cross.'' counts branch-cut crossings during training; ``fit'' is the residual
to the target (lower is more expressive).}
\label{tab:app}
\centering
\begin{tabular}{lcccl}
\toprule
guard & cross. & fit & $\SP/\pi$ & outcome \\
\midrule
none        & 1 & 0.80 & 2.03 & crosses \\
det-sector  & 1 & 0.80 & 2.03 & fooled ($\sum\theta{\approx}0$) \\
$\SP$       & 0 & 9.28 & 0.90 & correct, best feasible \\
$\KSW$      & 0 & 15.1 & 0.45 & over-conservative \\
brute (minors) & 0 & 9.28 & 0.90 & $\equiv\SP$, $2^d$ cost \\
\bottomrule
\end{tabular}
\end{table}

The reading is unambiguous. The determinant guard is inactive (its value
$\approx0$) and permits the crossing---guarding $\det$ alone is unsafe for the
tower. $\SP$ prevents every crossing at the best feasible fit. $\KSW$ also
prevents crossings but from a strictly smaller ball, paying a large fit penalty
($15.1$ vs.\ $9.28$)---quantifying the cost of the physics-strongest certificate
when the model needs only the subset tower. Brute-force minor projection is
byte-identical to $\SP$ at exponential cost. Thus $\SP$ is the correct tool;
$\Dsec$ too weak, $\KSW$ too strong, brute force redundant.

\subsection{Negative result: not a propagation stabilizer}
\label{sec:negative}
Because \eqref{eq:ksw} is phase-only (Proposition~\ref{prop:orth}), it cannot
control the magnitude growth of an iterated map $x_{t+1}=Wx_t$; that is governed
by singular values and, at fixed magnitude, by eigenvector conditioning
$\kappa(V)$. Fig.~\ref{fig:cond} makes the consequence explicit: at fixed
spectral radius one, \emph{tightening} the phase budget clusters the eigenvalues
near a common angle, driving $\kappa(V)$---and hence transient amplification and
gradient growth---up by many orders of magnitude. The gradient factor through
$100$ steps falls from $2.7{\times}10^5$ at budget $0.1\pi$ to $32$ at $3\pi$.
The naive reading of KSW as a deep-network stabilizer is therefore false; if
anything a tight budget is harmful. This delimits the criterion's scope to static
admissibility of determinantal and exterior-power objectives, exactly where
Sec.~\ref{sec:app} shows it to be the right instrument.

\begin{figure}[t]
\centering
\includegraphics[width=0.86\columnwidth]{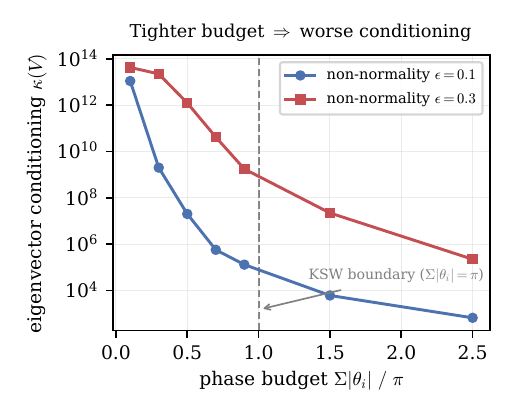}
\caption{Negative result. At fixed unit spectral radius, a tighter phase budget
clusters eigenvalues and \emph{worsens} eigenvector conditioning $\kappa(V)$,
increasing transient/gradient amplification under iteration. KSW does not
stabilize propagation.}
\label{fig:cond}
\end{figure}

\section{SCOPE AND APPLICABILITY}
\label{sec:scope}
The certificate acts on the \emph{phase} of a spectrum, so it is inert for any
objective that reads the spectrum only through magnitudes. This is not a soft
caveat but a theorem, and it delimits applicability sharply.

\begin{proposition}[Magnitude objectives are phase-blind]
\label{prop:blind}
Let $W\in\C^{k\times k}$ act on $\C^k\cong\R^{2k}$ with real representation
$R(W)=\left[\begin{smallmatrix}\operatorname{Re}W & -\operatorname{Im}W\\
\operatorname{Im}W & \operatorname{Re}W\end{smallmatrix}\right]$. Then
$\det_{\R} R(W)=|\det_{\C}W|^2$, so
$\log|\det_{\R}R(W)| = 2\sum_i \log|\lambda_i(W)|$, which is invariant under
$\Dsec,\SP,\KSW$. More generally, any loss depending on a matrix spectrum only
through the magnitudes $|\lambda_i|$---including the normalizing-flow
log-likelihood $\log|\det J|$ and the log-determinant of a positive-definite
kernel (real, positive spectrum)---is unchanged by the certificates and can be
neither improved nor harmed by them.
\end{proposition}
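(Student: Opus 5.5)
The proof has two parts: first the determinant identity for the real representation, then an invariance argument that only uses the fact that the certificates see nothing but phases.

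\textbf{Step 1: the determinant identity.} Write $W=X+\mathrm iY$ with $X,Y$ real, and conjugate $R(W)$ by the complex unitary $U=\tfrac{1}{\sqrt2}\left[\begin{smallmatrix} I & I\\ -\mathrm i I & \mathrm i I\end{smallmatrix}\right]$. A direct block multiplication gives $U^{-1}R(W)U=\operatorname{diag}(W,\overline W)$. Equivalently, $R(W)$ acting on $\C^{2k}$ preserves the eigenspaces $v\pm\mathrm i w$, on which it acts as $W$ and $\overline{W}$ respectively. Taking determinants,
$\det_{\R}R(W)=\det W\cdot\overline{\det W}=|\det_{\C}W|^2=\prod_i|\lambda_i|^2$.

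The log identity $\log|\det_{\R}R(W)|=2\sum_i\log|\lambda_i|$ follows whenever $W$ is nonsingular. If $W$ is singular, both sides equal $-\infty$, and I would note this convention.

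The only step that needs care is the block-conjugation computation. Alternatively, one can argue by density: diagonalizable $W$ are dense, and for them the spectrum of $R(W)$ is $\{\lambda_i\}\cup\{\bar\lambda_i\}$. Continuity of the determinant then finishes the argument.

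\textbf{Step 2: the general invariance statement.} I interpret ``unchanged by the certificates'' precisely. Any action of the certificates, whether a constraint set $\{\mathcal C<b\}$ or the projection of Algorithm~\ref{alg:proj}, depends on and modifies only the phases $a_i$, or $\theta_i$ in the Schur form \eqref{eq:schur}. Meanwhile $|\lambda_i|$, or $r_i=e^{\rho_i}$, is left fixed.

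Proposition~\ref{prop:orth} already gives the complementary fact: magnitudes can be set arbitrarily without changing $\Dsec,\SP,\KSW$. So the parameter space factors as (phases) $\times$ (magnitudes, unitary $Q$, strictly upper part of $T$). A loss $L=f(|\lambda_1|,\dots,|\lambda_k|)$ depends only on the second factor, and each certificate depends only on the first.

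Two consequences follow.
\begin{itemize}
\item For every fixed magnitude vector, the infimum of $L$ over the constrained set equals the unconstrained infimum. The feasible set always contains, for instance, $\theta=0$ with the same $r$, since $\mathcal C(0)=0<b$.
\item Projection changes $\theta$ but not $r$, hence not $L$.
\end{itemize}
This is the formal content of ``neither improved nor harmed.''

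\textbf{Step 3: the named examples.}
\begin{itemize}
\item The normalizing-flow likelihood $\log|\det J|=\sum_i\log|\lambda_i(J)|$ is a function of magnitudes only. For a real Jacobian of a complex map, Step 1 supplies this form.
\item A positive-definite kernel has real positive spectrum, so every $a_i=0$. All three certificates then vanish identically, and the constraint is trivially inactive.
\end{itemize}
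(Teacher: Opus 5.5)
Your proof is correct and follows the same route as the paper. The identity $\det_{\R}R(W)=|\det_{\C}W|^2$ gives the log-magnitude form, and invariance holds because the certificates read only the $\arg\lambda_i$ while the loss reads only the $|\lambda_i|$, with $\arg\lambda_i=0$ for a positive-definite spectrum; the paper simply calls the identity standard where you prove it via $U^{-1}R(W)U=\operatorname{diag}(W,\overline W)$ (which checks out), and you make ``neither improved nor harmed'' precise through the phase/magnitude split of the Schur parameters, which the paper leaves implicit.
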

\begin{proof}
$\det_{\R}R(W)=|\det_{\C}W|^2$ is standard. Hence
$\log|\det_{\R}R(W)|=2\operatorname{Re}\sum_i\log\lambda_i=2\sum_i\log|\lambda_i|$.
The functionals $\Dsec,\SP,\KSW$ depend only on $\argu\lambda_i$, which is absent
here; for a positive-definite spectrum $\argu\lambda_i=0$. \end{proof}

Proposition~\ref{prop:blind} pins down the applicable regime. The certificate
matters only when the loss consumes the \emph{argument} of a spectral product,
i.e.\ an explicit complex log-volume $\log\prod_{i\in S}\lambda_i$ whose imaginary
part enters the objective, as in Sec.~\ref{sec:app}. A survey of the dominant
determinant and spectral objectives in machine learning finds none of this form:
normalizing-flow likelihoods use $|\det J|$~\cite{glow,nfsurvey}; Gaussian
processes, determinantal point processes, and Gaussian graphical models use
log-determinants of positive-definite kernels~\cite{gp,dpp,ndpp}; complex-valued
networks place the loss on output magnitude~\cite{cvnn}. All are phase-blind by
Proposition~\ref{prop:blind}. Complex determinants do appear natively in lattice
field theory with a sign problem~\cite{signproblem}, but there the determinant
phase is the physical signal to be estimated, not a well-posedness nuisance to be
constrained, so the certificate is intent-mismatched. We therefore claim the
certificate as a complete, cheap tool for a specific and presently uncommon
objective class---explicit complex log-volume readouts---and not as a drop-in for
mainstream generative or kernel models.

\section{RELATED WORK}
Magnitude-based spectral control is standard: spectral
normalization~\cite{specnorm}, Parseval and orthogonal networks~\cite{parseval},
and unitary or complex recurrent maps~\cite{urnn,wisdom,trabelsi} constrain singular values or fix
$|\lambda|=1$; antisymmetric and Lipschitz recurrent networks~\cite{antirnn,lipschitzrnn}
obtain norm-preservation from skew-symmetric generators. All operate on the
magnitude axis and, by Proposition~\ref{prop:orth}, leave the KSW phase functional
free. Non-normality, transient growth, and the conditioning of deep linear
propagation are classically analyzed through pseudospectra and dynamical
isometry~\cite{pseudo,saxe,dyniso}, consistent with
Sec.~\ref{sec:negative}. On the application side, log-determinant objectives and
(sub)determinant positivity underlie Gaussian processes~\cite{gp}, determinantal
point processes~\cite{dpp,ndpp}, and normalizing flows~\cite{realnvp,glow,nfsurvey}; by
Proposition~\ref{prop:blind} these are phase-blind, so our certificate is
complementary rather than competing. Its contribution is a single, cheap,
differentiable certificate for the well-posedness of the exterior-power tower,
imported from complex-metric admissibility~\cite{ks,witten,mmm} and, to our
knowledge, new to machine learning.

\section{DISCUSSION AND LIMITATIONS}
We have shown that the KSW criterion ports faithfully to a spectral-phase
certificate that is provably distinct from existing (magnitude, positivity)
constraints, that it heads an exact and cheap hierarchy whose middle member
$\SP$ is the correct certificate for exterior-power well-posedness, and that its
scope is determinantal rather than dynamical. Two points bound the present
claims. First, applicability is narrow and we have delimited it precisely rather
than left it open: Proposition~\ref{prop:blind} (Sec.~\ref{sec:scope}) proves
that magnitude-based objectives---normalizing-flow likelihoods and
positive-definite log-determinants alike---are phase-blind, so the certificate
applies only to the presently uncommon class of objectives that consume the
argument of a spectral product. The demonstration in Sec.~\ref{sec:app} is
accordingly a mechanism study on such an objective, not a benefit claim on a
mainstream model, and we make no such claim. Second, KSW is defined for
a symmetric metric; applying it to a general non-normal map is a modeling choice
we made explicit, and the Schur parameterization sidesteps the numerical
fragility of eigen-phases for strongly non-normal matrices. We regard the result
as a small, self-contained tool with a clear usage rule---use $\SP$ for the
exterior-power tower, $\Dsec$ for determinants alone, and neither for propagation
stability.

\end{document}